\newtheorem{lemma}{Lemma}[section]
\newtheorem{theorem}[lemma]{Theorem}
\newtheorem{corollary}[lemma]{Corollary}
\newtheorem{definition}[lemma]{Definition}
\begin{document}

\title{Performance Guaranteed Approximation Algorithm for Minimum $k$-Connected $m$-Fold Dominating Set}
\author{\footnotesize Zhao Zhang$^1$\quad Jiao Zhou$^1$\quad Xiaohui Huang$^1$ \quad Ding-Zhu Du$^2$\\
    {\it\small $^1$ College of Mathematics Physics and Information Engineering,
    Zhejiang Normal University}\\
    {\it\small Jinhua, Zhejiang, 321004, China.}\\
    {\it\small $^2$ Department of Computer Science, University of Texas at Dallas}\\
    {\it\small Richardson, Texas, 75080, USA}}
\date{}
\maketitle

\begin{abstract}
To achieve an efficient routing in a wireless sensor network, connected dominating set (CDS) is used as virtual backbone. A fault-tolerant virtual backbone can be modeled as a $(k,m)$-CDS. For a connected graph $G=(V,E)$ and two fixed integers $k$ and $m$, a node set $C\subseteq V$ is a $(k,m)$-CDS of $G$ if every node in $V\setminus C$ has at least $m$ neighbors in $C$, and the subgraph of $G$ induced by $C$ is $k$-connected. Previous to this work, approximation algorithms with guaranteed performance ratio in a general graph were know only for $k\leq 3$. This paper makes a significant progress by presenting a $(2k-1)\alpha_0$ approximation algorithm for general $k$ and $m$ with $m\geq k$, where $\alpha_0$ is the performance ratio for the minimum CDS problem. Using currently best known ratio for $\alpha_0$, our algorithm has performance ratio $O(\ln\Delta)$, where $\Delta$ is the maximum degree of the graph.

\vskip 0.2cm {\bf Keyword}: wireless sensor network; virtual backbone; connected dominating set; fault-tolerance; approximation algorithm.
\end{abstract}

\section{Introduction}

A wireless sensor network (WSN) consists of large quantities of sensors, which collaborate to accomplish a task. In implementing a WSN, saving energy is an important issue. If all sensors participate in the transmission frequently, it is clearly a waste of energy. Furthermore, such an active participation in transmission will incur a more serious problem known as {\em broadcast storm}, which is the result of intense interferences. Motivated by these considerations, people proposed using {\em virtual backbone} to play the main role of transmission \cite{Das,Ephremides}, which can be modeled as a connected dominating set in a graph.

Given a connected graph $G=(V,E)$, a node set $C\subseteq V$ is a {\em dominating set} (DS) of $G$ if every node in $V\setminus C$ has at least one neighbor in $C$. It is a {\em connected dominating set} (CDS) of $G$ if $G[C]$ (the subgraph of $G$ induced by $C$) is connected. Using CDS as virtual backbone, information can be shared in the whole network. On the other hand, since the number of nodes participating transmission is reduced, the energy is saved and the interference is lessened.

In many applications, fault-tolerance is desirable. The concept of $(k,m)$-CDS is used to describe a fault-tolerant virtual backbone \cite{Dai}. A node set $C$ is a {\em $k$-connected $m$-fold dominating set} ($(k,m)$-CDS for abbreviation) if every node in $V\setminus C$ has at least $m$-neighbors in $C$ and $G[C]$ is $k$-connected. Using a $(k,m)$-CDS as the virtual backbone, in the presence of at most $\min\{m-1,k-1\}$ faults, information can still be shared in the remaining network.

\subsection{Related Works}

Connected dominating set was proposed in \cite{Das,Ephremides} to serve as a virtual backbone of a wireless sensor network for the purpose of saving energy and alleviating interference. Guha and Khuller \cite{Guha} proved that a minimum CDS cannot be approximated within a factor of $(1-\varepsilon)\ln n$ for any $0<\varepsilon<1$ unless $NP\subseteq DTIME(n^{O(\log\log n)})$. In the same paper, Guha and Khuller gave two approximation algorithms achieving performance ratios of $2(H(\Delta)+1)$ and $H(\Delta)+2$, respectively, where $\Delta$ is the maximum degree of the graph and $H(\cdot)$ is the Harmonic number. Using a new method dealing with non-submodular potential functions, Ruan {\it et al.} \cite{Ruan} improved the ratio to $\ln\Delta+2$.

A homogeneous wireless sensor network can be modeled as a {\em unit disk graph}, in which every node of the graph corresponds to a sensor on the plane, and two nodes are adjacent in the graph if and only if the Euclidean distance between their corresponding sensors is at most one unit. For the minimum CDS problem in a unit disk graph, Cheng {\it et al.} \cite{Cheng} proposed a PTAS, which was generalized by Zhang {\it et al.} \cite{Zhang} to unit ball graph in higher dimensional space. These are centralized algorithms. Distributed algorithms can be found in \cite{Li1,LiYingshu,Wan,Wan1,Wu}. Comprehensive studies on the algorithmic aspect of CDS is collected in the monograph \cite{DuBookCDS}.

Dai and Wu \cite{Dai} were the first to propose the problem of constructing fault-tolerant virtual backbone. They presented three heuristics for $(k,k)$-CDS, without giving theoretical analysis on the performance ratio. After that, a lot of works appear in this field. The state of art studies are summarized in Table \ref{tab1}. Since we are now talking about fault-tolerant virtual backbone, at least one of $k$ and $m$ is assumed to be at least $2$. Except for \cite{Wang}, all other results in this table assumes $m\geq k$. It can be seen from Table \ref{tab1} that when $k\leq 3$, $(\ln\Delta+o(\ln\Delta))$-approximation exists for $(k,m)$-CDS in a general graph, which is asymptotically best possible because of the inapproximability of this problem \cite{Guha}. As to $(k,m)$-CDS on unit disk graph, constant approximation exists when $k\leq 3$. Recently, we \cite{Shi1} and Fukunage \cite{Fukunage} independently achieved constant ratio for the minimum {\em weight} $(k,m)$-CDS problem on a unit disk graph when $m\geq k$. As far as we know, there is no performance guaranteed approximation algorithm for the minimum $(k,m)$-CDS problem on a general graph.

\renewcommand{\arraystretch}{1.3}

\begin{table}[h!]
\centering
\begin{tabular}{ |c|c|c|c| }
\hline
graph & $(k,m)$ & ratio & reference \\
\hline
general & $(1,m)$ & $2H(\Delta+m-1)$ & \cite{Zhang1} \\
\hline
general & $(1,m)$ & $2+\ln(\Delta+m-2)$ & \cite{Zhou} \\
\hline
general & $(2,m)$ & $\begin{array}{c}2+\alpha_0+2\ln\alpha_0\\ \mbox{where $\alpha_0$ is performance ratio for $(1,m)$-CDS}\end{array}$ & \cite{Shi} \\
\hline
general & $(3,m)$ & $\begin{array}{c}
\left\{\begin{array}{l}\alpha_1+8+2\ln(2\alpha_1-6),\ \mbox{for}\  \alpha_1\geq4\\ 3\alpha_1+2\ln2,\ \mbox{for}\  \alpha_1<4\end{array}\right.\\ \mbox{where $\alpha_1$ is performance ratio for $(2,m)$-CDS} \end{array}$ & \cite{Zhang2}\\
\hline
general & $(k,m)$ & $\begin{array}{c}(2k-1)\alpha_0\\ \mbox{where $\alpha_0$ is performance ratio for $(1,m)$-CDS}\end{array}$ & *\\
\hline
UDG & $(2,1)$ & 72 & \cite{Wang} \\
\hline
UDG & $(1,m)$ & $\left\{\begin{array}{ll}5+5/m, & m\leq 5\\ 7, & m>5\end{array}\right.$ & \cite{Shang}\\
\hline
UDG & $(2,m)$ & $\left\{\begin{array}{ll}15+15/m, & 2\leq m\leq 5\\ 21, & m>5\end{array}\right.$ & \cite{Shang}\\
\hline
UDG & $(2,m)$ & $\left\{\begin{array}{l}7+5/m+2\ln (5+5/m),\ \mbox{for}\  2\leq m\leq 5\\ 12.89,\qquad\qquad\mbox{for}\  m>5\end{array}\right.$ & \cite{Shi}\\
\hline
UDG & $(3,m)$ & constant (280 for $m=3$) & \cite{Wang1}\\
\hline
UDG & $(3,m)$ & $\begin{array}{ll}\mbox{$5\alpha_1^U$, where $\alpha_1^U$ is performance ratio for $(2,m)$-CDS}\\ \mbox{on UDG (62.5 for $m=3$)}\end{array}$ & \cite{Wang2}\\
\hline
UDG & $(3,m)$ & $\left\{\begin{array}{ll} 26.34, & m=3\\ 25.68, & m=4\\26.86, & m\geq 5\end{array}\right.$ & \cite{Zhang2}\\
\hline
UDG & $(k,m)$ & minmum weight version: constant & \cite{Fukunage,Shi1}\\
\hline
\end{tabular}
\vskip 0.2cm \caption{Results on $(k,m)$-CDS with guaranteed performance ratio. In this table, $m\geq k$. * indicates the result obtained in this paper.}
\label{tab1}
\end{table}

\subsection{Our Contribution}

In this paper, we present an approximation algorithm for the minimum $(k,m)$-CDS problem for arbitrary constants $k$ and $m$ with $m\geq k$, which achieves performance ratio $(2k-1)\alpha_0$, where $\alpha_0$ is the performance ratio for the minimum CDS problem. Using the best known ratio $\alpha_0=2+\ln (\Delta+m-2)$ for a general graph, our algorithm has performance ratio $O(\ln\Delta)$. To the best of our knowledge, this is the first approximation algorithm with a guaranteed performance ratio for $k\geq 4$ in a general graph. For unit disk graph, using the best known ratio $\alpha_0=1+\varepsilon$, where $\varepsilon$ is an arbitrary real number in $(0,1)$, our algorithm achieves a constant performance ratio of $(2k-1+\varepsilon)$, which is smaller than the constant ratio in \cite{Fukunage,Shi1} (it should be noticed that our algorithm only works for the unweighted case while those in \cite{Fukunage,Shi1} are valid for the weighted case).

In previous works \cite{Shi,Zhang2} on approximation algorithms for $(k,m)$-CDS in a general graph with $k\leq 3$, the main idea is to augment an $(i,m)$-CDS into an $(i+1,m)$-CDS by extending a maximal $(i+1)$-connected subgraph. However, for $i\geq 3$, such an idea has not succeeded in finding a performance guaranteed approximation because for $i\geq 3$, the changes of structures are very complicated when more nodes are added. In this paper, we use a new idea of extending a so-called $i$-block (the subgraph induced by which might even be disconnected).

The outline of our algorithm is as follows. The algorithm starts from a $(1,m)$-CDS $C_0$ of $G$ and increases its connectivity iteratively. We shall show that for any $1\leq i\leq k-1$, it is possible to make an $(i,m)$-CDS to be an $(i+1,m)$-CDS in polynomial time by adding at most $2|C_0|$ nodes. Unlike previous works which makes use of {\em brick decomposition} \cite{Zhou,Shi,Zhang2}, the strategy of this paper is to expand a so-called {\em $i$-block} step by step, where an $i$-block is a maximal set of nodes which cannot be separated by any $i$-separator of the subgraph induced by current backbone nodes \cite{Mader}. Notice that the subgraph induced by an $i$-block can even be disconnected. However, we can show that by adding at most $2|C_0|$ nodes, an $i$-block can be expanded into an $(i+1)$-connected graph. This is achieved by showing that as long as current backbone nodes do not induce an $(i+1)$-connected graph, it is always possible to find at most two nodes in polynomial time, the addition of which strictly expands an $i$-block $B$ by merging at least one node from $C_0\setminus B$. Such expansion can be executed at most $|C_0|$ times, resulting in the addition of at most $2|C_0|$ nodes. Then, the desired performance ratio follows.

The paper is organized as follows. In Section \ref{Pre}, some preliminary results are given. Section \ref{algo} presents the algorithm and analyzes its performance ratio. Section \ref{con} concludes the paper and points out some directions for future work.

\section{Preliminary Results}\label{Pre}

First, we introduce some notations used in this paper. For a node $u\in V(G)$ and a node set $C\subseteq V(G)$, $N_C(v)$ is the set of nodes in $C$ which are adjacent with $u$ in $G$. In particular, $N_G(u)=N_{V(G)}(u)$ is the {\em neighbor set} of $u$ in $G$. For a node set $U\subseteq V(G)$, $N_G(U)=\left(\bigcup_{u\in U}N_G(u)\right)\setminus U$ is the {\em open neighbor set} of $u$, and $N_G[U]=N_G(U)\cup U$ is the {\em closed neighbor set} of $U$. Suppose $G$ is a connected graph. A node set $S\subseteq V(G)$ is called a {\em separator} of $G$ if $G-S$ is no longer connected. In particular, a separator of $G$ with cardinality $i$ is called an {\em $i$-separator} of $G$.

The following result is a classic result in graph theory \cite{Bondy}.

\begin{lemma}\label{lem15-3-30-2}
Suppose $G_1$ is an $i$-connected graph and
$G_2$ is obtained from $G_1$ by adding a new
vertex $u$ and joining $u$ to at least $i$ vertices
of $G_1$. Then $G_2$ is also $i$-connected.
\end{lemma}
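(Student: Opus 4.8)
The plan is to verify the definition of $i$-connectivity directly. Recall that a graph $H$ is $i$-connected when $|V(H)|\ge i+1$ and $H-S$ is connected for every $S\subseteq V(H)$ with $|S|\le i-1$. The cardinality requirement for $G_2$ is immediate: since $G_1$ is $i$-connected we have $|V(G_1)|\ge i+1$, hence $|V(G_2)|=|V(G_1)|+1\ge i+2$. So the real task is to show that $G_2-S$ is connected for an arbitrary vertex set $S\subseteq V(G_2)$ with $|S|\le i-1$.

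I would carry this out by a two-case analysis on whether the new vertex $u$ belongs to $S$. If $u\in S$, then $G_2-S=G_1-(S\setminus\{u\})$ and $|S\setminus\{u\}|\le i-2$, so this graph is connected by the $i$-connectivity of $G_1$ (here even $(i-1)$-connectivity would suffice). If $u\notin S$, then $S\subseteq V(G_1)$ and $|S|\le i-1$, so $G_1-S$ is connected, again by the $i$-connectivity of $G_1$. Moreover $u$ is joined to at least $i$ vertices of $G_1$, and since only $i-1$ vertices are removed, at least one neighbor $w$ of $u$ survives in $G_1-S$. Thus in $G_2-S$ the vertex $u$ is adjacent to $w$, which lies in the connected graph $G_1-S$; attaching a single extra vertex to a connected graph keeps it connected, so $G_2-S$ is connected as well.

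There is no substantial obstacle here; the only point that must not be glossed over is that deleting at most $i-1$ vertices cannot kill all $i$ edges incident to $u$, which is precisely what keeps $u$ attached to the rest of the graph in the second case. One could alternatively argue via Menger's theorem, producing $i$ internally disjoint paths between any two vertices of $G_2$, but the direct separator argument above is the shortest route and is the one I would present.
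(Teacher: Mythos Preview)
Your argument is correct and is the standard textbook proof. The paper itself does not prove this lemma; it merely cites it as a classical result from \cite{Bondy}, so there is nothing to compare against beyond noting that your direct separator argument is exactly the usual one.
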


\begin{corollary}\label{cor15-3-30-3}
Suppose $i$ and $m$ are two positive integers with $m\geq i+1$, $G$ is an $(i+1)$-connected graph, and $C$ is an $(i',m)$-CDS of $G$ with $i'\leq i$. For any node set $U\subseteq V(G)\setminus C$,

$(\romannumeral1)$ node set $C\cup U$
is also an $(i',m)$-CDS of $G$;

$(\romannumeral2)$ if $S$ is a separator of $G[C\cup U]$ with cardinality at most $i$, then $S\cap C$ is also a separator of $G[C]$;

$(\romannumeral3)$ if $G[C]$ is $i$-connected and $S$ is an $i$-separator of $G[C\cup U]$, then $S\subseteq C$ and $S$ is also an $i$-separator of $G[C]$.
\end{corollary}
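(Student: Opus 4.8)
The plan is to handle the three parts in sequence, using Lemma \ref{lem15-3-30-2} as the engine and the hypothesis $m \geq i+1$ to control the domination condition.

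For part (i), the domination requirement is immediate: every node in $V(G) \setminus (C \cup U)$ is in particular a node of $V(G) \setminus C$, so it already has at least $m$ neighbors in $C \subseteq C \cup U$. For the connectivity requirement, I would add the nodes of $U$ to $C$ one at a time. The key observation is that since $G$ is $(i+1)$-connected, any node $u \in V(G) \setminus C$ has at least $i+1 > i \geq i'$ neighbors in $G$; moreover, since $C$ is a dominating set containing all of $V(G) \setminus C$'s "targets," I actually need $u$ to have $i'$ neighbors \emph{inside} the current induced subgraph. Here I use that $u$ has $m \geq i+1 > i'$ neighbors in $C$ (because $u \notin C$ and $C$ is an $m$-fold dominating set), hence at least $i'$ neighbors in $C$ and a fortiori in the current vertex set. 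Applying Lemma \ref{lem15-3-30-2} repeatedly, the induced subgraph stays $i'$-connected as each node of $U$ is absorbed, so $G[C \cup U]$ is $i'$-connected and thus $C \cup U$ is an $(i',m)$-CDS.

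For part (ii), suppose $S$ separates $G[C \cup U]$ into components, with $|S| \leq i$. I want to show $S \cap C$ separates $G[C]$. The natural approach: take two nodes $x, y \in C \setminus S$ lying in different components of $G[C \cup U] - S$, and argue that they also lie in different components of $G[C] - (S \cap C)$. A path in $G[C] - (S\cap C)$ from $x$ to $y$ would be a path in $G[C\cup U] - S$ (its internal nodes lie in $C$ and avoid $S \cap C$, hence avoid $S$), contradicting the separation. The subtle point is ensuring such a pair $x,y \in C$ actually exists in distinct components — i.e., that the separation of $G[C\cup U]$ is "witnessed" by nodes of $C$ and not merely by nodes of $U$. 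This is where I expect the main obstacle: one must rule out the degenerate case where one side of the cut consists entirely of nodes in $U$. I would use part (i) (so $G[C]$ has connectivity $i' \leq i$, i.e. $C$ is nonempty and large) together with the fact that each $u \in U$ has $\geq m \geq i+1$ neighbors in $C$; since $|S| \leq i < i+1 \leq m$, any component of $G[C \cup U] - S$ containing a node of $U$ must also contain a neighbor of that node lying in $C \setminus S$. Hence every nonempty side of the cut meets $C \setminus S$, giving the required $x, y \in C$, and also showing $C\setminus S$ is itself disconnected in $G[C]-(S\cap C)$, i.e. $S \cap C$ is a separator of $G[C]$ (with $G[C]$ having at least two vertices so this makes sense).

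For part (iii), assume additionally that $G[C]$ is $i$-connected and $S$ is an $i$-separator of $G[C \cup U]$. By part (ii), $S \cap C$ separates $G[C]$, so $|S \cap C| \geq i$ by $i$-connectivity of $G[C]$; combined with $|S| = i$ this forces $S \cap C = S$, i.e. $S \subseteq C$. Then $|S| = i$ and $S$ separates $G[C]$, so $S$ is an $i$-separator of $G[C]$. The only thing to double-check is that the separation in part (ii) was genuine (two nonempty sides), which is exactly what the refined argument above delivers.
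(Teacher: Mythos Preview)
Your proposal is correct and follows essentially the same approach as the paper: part~(i) via Lemma~\ref{lem15-3-30-2} and monotonicity of domination, part~(ii) via the key observation that every $u\in U\setminus S$ has $|N_C(u)|\ge m>i\ge |S|$ and hence a neighbor in $C\setminus S$, and part~(iii) as an immediate cardinality consequence of~(ii). The only cosmetic difference is that for~(ii) the paper argues the contrapositive (if $G[C]-(S\cap C)$ is connected then so is $G[C\cup U]-S$), whereas you produce explicit witnesses $x,y\in C\setminus S$ in different components; the underlying idea is identical.
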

\begin{proof}
Property $(\romannumeral1)$ is the result of Lemma \ref{lem15-3-30-2} and the monotonicity of $m$-fold domination.

$(\romannumeral2)$ Suppose $S\cap C$ is not a separator of $G[C]$. Then $G[C]-(S\cap C)$ is connected. Since $C$ is an $m$-fold dominating set, any node $u\in U\setminus S$ has $|N_C(u)|\geq m\geq i+1>|S|$, and thus $u$ has a neighbor in $C\setminus S$. It follows that $G[C\cup U]-S$ is connected, contradicting that $S$ is a separator of $G[C\cup U]$.

$(\romannumeral3)$ is a direct consequence of $(\romannumeral2)$.
\end{proof}

Property $(\romannumeral3)$ implies that adding nodes into an $(i,m)$-CDS will not create new $i$-separators.

For a node set $A$, if $V\setminus N_G[A]\neq\emptyset$, then $N_G(A)$ is a separator. In particular, if $N_G(A)$ is a minimum separator of $G$, then $A$ is called a {\em fragment} of $G$. The following inequality is well known (one may find it, for example, in \cite{Bondy}): for any two node sets $A$ and $B$,
\begin{equation}\label{eq10-4-1}
|N_G(A)|+|N_G(B)|\geq |N_G(A\cap B)|+|N_G(A\cup B)|.
\end{equation}

\begin{lemma}\label{lem-fragment}
Suppose $A_1,A_2$ are two fragments of $G$. If $A_1\cap A_2\neq\emptyset$ and $V\setminus (N_G[A_1]\cup N_G[A_2])\neq\emptyset$, then both $A_1\cap A_2$ and $A_1\cup A_2$ are also fragments of $G$.
\end{lemma}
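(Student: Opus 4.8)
The plan is to use the submodularity-type inequality \eqref{eq10-4-1} together with the defining property that a fragment $A$ has $|N_G(A)|$ equal to the connectivity $\kappa(G)$, the minimum separator size. Write $\kappa = \kappa(G)$, so $|N_G(A_1)| = |N_G(A_2)| = \kappa$. The key point is that $N_G(A_1 \cap A_2)$ and $N_G(A_1 \cup A_2)$ are both separators of $G$ provided their complements-of-closed-neighborhoods are nonempty, hence each has size at least $\kappa$; combined with \eqref{eq10-4-1} this forces both to have size exactly $\kappa$, which is precisely what it means to be a fragment.

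First I would check that $N_G(A_1 \cap A_2)$ is a separator, i.e. that $V \setminus N_G[A_1 \cap A_2] \neq \emptyset$. Since $A_1 \cap A_2 \subseteq A_1$, we have $N_G[A_1 \cap A_2] \subseteq N_G[A_1]$, and the hypothesis $V \setminus (N_G[A_1] \cup N_G[A_2]) \neq \emptyset$ gives in particular $V \setminus N_G[A_1] \neq \emptyset$; hence $V \setminus N_G[A_1 \cap A_2] \supseteq V \setminus N_G[A_1] \neq \emptyset$, so $N_G(A_1 \cap A_2)$ separates $A_1 \cap A_2$ (nonempty by hypothesis) from this remaining part, and therefore $|N_G(A_1 \cap A_2)| \geq \kappa$. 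Next I would check that $N_G(A_1 \cup A_2)$ is a separator: here the set $V \setminus N_G[A_1 \cup A_2] = V \setminus (N_G[A_1] \cup N_G[A_2])$ is nonempty exactly by the second hypothesis, and $A_1 \cup A_2 \neq \emptyset$, so again $|N_G(A_1 \cup A_2)| \geq \kappa$.

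Now I would combine: by \eqref{eq10-4-1},
\[
2\kappa = |N_G(A_1)| + |N_G(A_2)| \geq |N_G(A_1 \cap A_2)| + |N_G(A_1 \cup A_2)| \geq \kappa + \kappa = 2\kappa,
\]
so equality holds throughout; in particular $|N_G(A_1 \cap A_2)| = |N_G(A_1 \cup A_2)| = \kappa$. Since we have already verified that each of $N_G(A_1 \cap A_2)$ and $N_G(A_1 \cup A_2)$ is a separator of $G$ of minimum cardinality $\kappa$, by definition $A_1 \cap A_2$ and $A_1 \cup A_2$ are both fragments of $G$, completing the proof.

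I do not expect a serious obstacle here; the only points requiring care are the two nonemptiness checks (to legitimately call the neighbor sets separators and hence lower-bound their sizes by $\kappa$), and making sure the inequality \eqref{eq10-4-1} is applied in the right direction. The hypothesis $A_1 \cap A_2 \neq \emptyset$ is needed so that the separator $N_G(A_1 \cap A_2)$ actually has something on both sides, and the hypothesis $V \setminus (N_G[A_1] \cup N_G[A_2]) \neq \emptyset$ plays the analogous role for $A_1 \cup A_2$; both are used exactly once and nowhere else.
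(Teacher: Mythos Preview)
Your proof is correct and follows essentially the same approach as the paper's: both use the submodular inequality \eqref{eq10-4-1} together with the fact that $N_G(A_1\cap A_2)$ and $N_G(A_1\cup A_2)$ are separators (hence of size at least $\kappa$) to force equality throughout. You actually spell out the nonemptiness verifications more carefully than the paper does, which simply asserts ``By the condition of this lemma, both $N_G(A_1\cap A_2)$ and $N_G(A_1\cup A_2)$ are separators of $G$'' without further comment.
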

\begin{proof}
By the condition of this lemma, both $N_G(A_1\cap A_2)$ and $N_G(A_1\cup A_2)$ are separators of $G$. Suppose the connectivity of $G$ is $i$, then $|N_G(A_1\cap A_2)|\geq i$ and $|N_G(A_1\cup A_2)|\geq i$. By the definition of fragment, $|N_G(A_1)|=|N_G(A_2)|=i$. Then by inequality \eqref{eq10-4-1},
\begin{equation}\label{eq10-4-2}
2i=|N_G(A_1)|+|N_G(A_2)|\geq |N_G(A_1\cap A_2)|+|N_G(A_1\cup A_2)|\geq 2i,
\end{equation}
which implies that $|N_G(A_1\cap A_2)|=|N_G(A_1\cup A_2|=i$. The lemma follows.
\end{proof}

For two node sets $S,B\subseteq V(G)$, we say that $S$ cannot not separate $B$ if $G[B\setminus S$] is connected. In the following, we give the definition of $i$-block, which was first studied by Mader \cite{Mader}, and receives a lot of attention recently because they offer a meaningful notion of the ``$k$-connected pieces'' into which the graph may be decomposed \cite{Carmesin2}. This notion is related to, but not the same as, the notion of a tangle proposed by Robertson and Seymour \cite{Robertson} in the study of minor.

\begin{definition}[$i$-block]\label{def-block}
Suppose $G$ is an $i$-connected graph. A node set $B$ is an {\em $i$-block} of $G$ if $B$ is a maximal node set satisfying the following two properties:

$(\romannumeral1)$ $|B|\geq i+1$, and

$(\romannumeral2)$ there is no $i$-separator of $G$ which can separate $B$.
\end{definition}

By Menger's Theorem, property $(\romannumeral2)$ in the above definition is equivalent to say that every pair of nodes in $B$ are connected by at least $(i+1)$ internally disjoint paths in $G$. Notice that the subgraph of $G$ induced by an $i$-block $B$ might even be disconnected. Those internally disjoint paths are required to be in $G$, not in $G[B]$. An example of a $3$-block is given in Fig.\ref{fig2}.

\begin{figure*}[!htbp]
\begin{center}
\begin{picture}(100,110)
\put(45,-10){(a)} \put(77.7,61.5){\circle*{5}}
\put(61.5,77.7){\circle*{5}} \put(38.5,77.7){\circle*{5}}
\put(22.3,61.5){\circle*{5}} \put(22.3,38.5){\circle*{5}}
\put(38.5,22.3){\circle*{5}} \put(61.5,22.3){\circle*{5}}
\put(77.7,38.5){\circle*{5}} \put(96.2,69.1){\circle*{5}}
\put(69.1,96.2){\circle*{5}} \put(30.9,96.2){\circle*{5}}
\put(3.8,69.1){\circle*{5}} \put(3.8,30.9){\circle*{5}}
\put(30.9,3.8){\circle*{5}} \put(69.1,3.8){\circle*{5}}
\put(96.2,30.9){\circle*{5}}
\qbezier(77.7,61.5)(69.6,69.6)(61.5,77.7)
\qbezier(61.5,77.7)(50.0,77.7)(38.5,77.7)
\qbezier(38.5,77.7)(30.4,69.6)(22.3,61.5)
\qbezier(22.3,61.5)(22.3,50.0)(22.3,38.5)
\qbezier(22.3,38.5)(30.4,30.4)(38.5,22.3)
\qbezier(38.5,22.3)(50.0,22.3)(61.5,22.3)
\qbezier(61.5,22.3)(69.6,30.4)(77.7,38.5)
\qbezier(77.7,38.5)(77.7,50.0)(77.7,61.5)
\qbezier(96.2,69.1)(82.7,82.7)(69.1,96.2)
\qbezier(69.1,96.2)(50.0,96.2)(30.9,96.2)
\qbezier(30.9,96.2)(17.4,82.7)(3.8,69.1)
\qbezier(3.8,69.1)(3.8,50.0)(3.8,30.9)
\qbezier(3.8,30.9)(17.4,17.4)(30.9,3.8)
\qbezier(30.9,3.8)(50.0,3.8)(69.1,3.8)
\qbezier(69.1,3.8)(82.7,17.4)(96.2,30.9)
\qbezier(96.2,30.9)(96.2,50.0)(96.2,69.1)
\qbezier(77.7,61.5)(87.0,65.3)(96.2,69.1)
\qbezier(61.5,77.7)(65.3,87.0)(69.1,96.2)
\qbezier(38.5,77.7)(34.7,87.0)(30.9,96.2)
\qbezier(22.3,61.5)(13.1,65.3)(3.8,69.1)
\qbezier(22.3,38.5)(13.1,34.7)(3.8,30.9)
\qbezier(38.5,22.3)(34.7,13.1)(30.9,3.8)
\qbezier(61.5,22.3)(65.3,13.1)(69.1,3.8)
\qbezier(77.7,38.5)(87.0,34.7)(96.2,30.9)
\qbezier(61.5,77.7)(46.2,87.0)(30.9,96.2)
\qbezier(38.5,77.7)(53.8,87.0)(69.1,96.2)
\qbezier(61.5,22.3)(46.2,13.1)(30.9,3.8)
\put(27,100){$u_1$}\put(65,100){$u_2$}\put(55,68){$u_3$}\put(38,68){$u_4$}\put(54,27){$u_5$}\put(25,-5){$u_6$}
\end{picture}

\vskip 0.5cm\caption{Node set $U=\{u_1,u_2,u_3,u_4,u_5,u_6\}$ forms a $3$-block. Notice that $G[U]$ is even disconnected. Also notice that $U'=\{u_1,u_2,u_3,u_4\}$ is not a $3$-block because it is not maximal with respect to the second property in Definition \ref{def-block}.}\label{fig2}
\end{center}
\end{figure*}
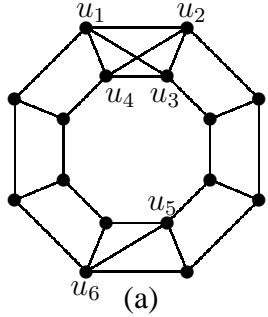

We say that $U$ is a {\em connected nodes set of $G$} if $G[U]$ is connected.

\begin{lemma}\label{lem15-10-14-1}
Suppose $i,m$ are two positive integers with $m\geq i+1$, $C$ is an $(i,m)$-CDS of $G$, $U\subseteq V(G)\setminus C$ is a connected node set of $G$. Then no $i$-separator of $G[C\cup U]$ can separate node set $U\cup N_C(U)$.
\end{lemma}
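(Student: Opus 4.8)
The plan is to reduce the whole statement to one structural fact --- that every $i$-separator of $G[C\cup U]$ is contained in $C$ --- after which the conclusion follows immediately from the way $U\cup N_C(U)$ is built around the connected set $U$. First I would observe that by Lemma \ref{lem15-3-30-2}, inserting the nodes of $U$ one at a time into $G[C]$ (each new node being joined to at least $m\ge i+1>i$ nodes of $C$) shows that $G[C\cup U]$ is $i$-connected, so it is meaningful to speak of its $i$-separators; if $G[C\cup U]$ happens to be $(i+1)$-connected there is nothing to prove, so let $S$ be an $i$-separator of $G[C\cup U]$. The core step is to prove $S\subseteq C$, by the same reasoning as in Corollary \ref{cor15-3-30-3}$(\romannumeral3)$: if $S\cap C$ were not a separator of $G[C]$, then $G[C]-(S\cap C)$ would be connected; since $|S\cap C|\le|S|=i<i+1\le m$, every $u\in U\setminus S$ would have a neighbor in the connected set $C\setminus S$, and therefore $G[C\cup U]-S$ would be connected, contradicting the choice of $S$. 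Hence $S\cap C$ is a separator of the $i$-connected graph $G[C]$, so $|S\cap C|\ge i=|S|$, i.e.\ $S\subseteq C$; in particular $S\cap U=\emptyset$.

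With $S\subseteq C$ in hand we have $(U\cup N_C(U))\setminus S=U\cup(N_C(U)\setminus S)$. By the definition of $N_C(U)$, every node of $N_C(U)\setminus S$ is adjacent to some node of $U$, and $G[U]$ is connected by hypothesis; hence any two nodes of $U\cup(N_C(U)\setminus S)$ can be joined by a path lying entirely inside this set --- step from each endpoint into $U$ along a single edge, then travel within $G[U]$. Therefore $G[(U\cup N_C(U))\setminus S]$ is connected, which is precisely the assertion that $S$ cannot separate $U\cup N_C(U)$.

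The only delicate point is the inclusion $S\subseteq C$, and it rests entirely on the strict inequality $|S|=i<i+1\le m$ supplied by the $m$-fold domination property: without the hypothesis $m\ge i+1$ an $i$-separator of $G[C\cup U]$ could meet $U$, and a node of $U$ placed into $S$ could genuinely disconnect $U\cup N_C(U)$, so this hypothesis is essential. Everything after that is just bookkeeping about the star-like shape of $U\cup N_C(U)$ around the connected set $U$, so I do not expect any real obstacle there.
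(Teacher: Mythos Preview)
Your proof is correct and follows essentially the same approach as the paper: both arguments first establish that any $i$-separator $S$ of $G[C\cup U]$ is contained in $C$ (you reprove this inline, while the paper simply invokes Corollary~\ref{cor15-3-30-3}$(\romannumeral3)$), and then use $S\cap U=\emptyset$ together with the connectedness of $G[U]$ to conclude that $G[(U\cup N_C(U))\setminus S]$ is connected. The extra preliminary remark that $G[C\cup U]$ is $i$-connected and the closing discussion of why $m\ge i+1$ is essential are sound additions but do not change the route.
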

\begin{proof}
Consider an arbitrary $i$-separator $S$ of $G[C\cup U]$. By Corollary \ref{cor15-3-30-3} $(\romannumeral3)$, $S\subseteq C$. In other words, $S\cap U=\emptyset$. Combining this with the assumption that $G[U]$ is connected, we see that subgraph $G[U\cup N_C(U)]-S$ is connected.
\end{proof}

\section{Algorithm and Its Analysis}\label{algo}

\subsection{The Highest Level Outline of Our Algorithm}

Our algoriothm starts from a $(1,m)$-CDS of $G$, and then augments its connectivity iteratively by adding more nodes. Suppose the $(1,m)$-CDS is $C_0$, which can be found using an existing algorithm such as the one in \cite{Zhou}. For $i=1,2,\ldots,k-1$, in the $i$-th iteration, suppose an $(i,m)$-CDS is already at hand, say $C$, we shall show that it is possible to find at most $2|C_0|$ nodes in polynomial time, such that adding them into $C$ will result in an $(i+1,m)$-CDS. As a consequence, a $(k,m)$-CDS can be obtained in polynomial time whose size is at most $(2k-1)|C_0|$. Then, the desired performance ratio follows from the upper bound of $|C_0|$ in \cite{Zhou}.

\subsection{Augmenting an $(i,m)$-CDS into an $(i+1,m)$-CDS}

In the following, we focus on how to augment an $(i,m)$-CDS into an $(i+1,m)$-CDS, where $i<k$. The idea is to expand an $i$-block iteratively until an $(i+1,m)$-CDS is obtained. To be more concrete, suppose $C$ is an $(i,m)$-CDS, and $B$ is an $i$-block of $G[C]$. We shall show that as long as $G[C]$ is not $(i+1)$-connected, it is possible to find a node set $U$ in polynomial time such that $|U|\leq 2$, $B$ is contained in a {\em strictly bigger} $i$-block of $G[C\cup U]$, say $B'$, and $B'\setminus B$ contains at least one node of $C_0$. If $G[C\cup U]$ is still not $(i+1)$-connected, then we set $C\leftarrow C\cup U$, $B\leftarrow B'$, and repeat. Because in each iteration, at least one node of $C_0$ is newly merged into the expanded $i$-block, such a process can be executed at most $|C_0|$ times. When it terminates, we must have an $(i+1)$-connected subgraph, and the number of nodes added is at most $2|C_0|$.

To realize the above idea, one question is: what if $G[C]$ does not contain an $i$-block. We shall deal with this problem later, and first deal with the case when $G[C]$ already contains an $i$-block.

\subsubsection{Expanding an $i$-Block}\label{expand-block}

Suppose $C$ is an $(i,m)$-CDS and $G[C]$ is not $(i+1)$-connected. Let $S_0$ be an $i$-separator of $G[C]$. Since $B$ is an $i$-block which cannot be separated by any $i$-separator, $B\setminus S_0$ is contained in one connected component of $G[C]-S_0$, denote this connected component as $G_1^{S_0}$, and let $G_2^{S_0}$ be the union of the other connected components of $G[C]-S_0$ (see Fig.\ref{figA1}$(a)$). By Corollary \ref{cor15-3-30-3} $(\romannumeral2)$, $S_0\cap C_0$ is a separator of $G[C_0]$. Let $G_1$ be the union of those connected components of $G[C_0]-(S_0\cap C_0)$ which have nonempty intersections with $G_1^{S_0}$, and let $G_2$ be the union of those remaining connected components of $G[C_0]-(S_0\cap C_0)$. We claim that neither $V(G_1)$ nor $V(G_2)$ is empty. In fact, if $V(G_2)$ is empty, then $C_0$ is contained in $V(G_1^{S_0})\cup S_0$. Consider a node $u\in V(G_2^{S_0})$, it is adjacent with at least $m\geq k>i=|S_0|$ nodes of $C_0$, and thus has at least one neighbor in $V(G_1^{S_0})$, contradicting that $G_1^{S_0}$ and $G_2^{S_0}$ are disconnected by the removal of $S_0$. The case when $V(G_1)$ is empty can be argued similarly.

\begin{figure*}[!htbp]
\begin{center}
\hskip -0.5cm\begin{picture}(130,130)
\qbezier(92.0,100.0)(91.5,103.9)(89.9,107.7)
\qbezier(89.9,107.7)(87.3,111.1)(83.8,114.1)
\qbezier(83.8,114.1)(79.6,116.6)(74.7,118.5)
\qbezier(74.7,118.5)(69.5,119.6)(64.0,120.0)
\qbezier(64.0,120.0)(58.5,119.6)(53.3,118.5)
\qbezier(53.3,118.5)(48.4,116.6)(44.2,114.1)
\qbezier(44.2,114.1)(40.7,111.1)(38.1,107.7)
\qbezier(38.1,107.7)(36.5,103.9)(36.0,100.0)
\qbezier(36.0,100.0)(36.5,96.1)(38.1,92.3)
\qbezier(38.1,92.3)(40.7,88.9)(44.2,85.9)
\qbezier(44.2,85.9)(48.4,83.4)(53.3,81.5)
\qbezier(53.3,81.5)(58.5,80.4)(64.0,80.0)
\qbezier(64.0,80.0)(69.5,80.4)(74.7,81.5)
\qbezier(74.7,81.5)(79.6,83.4)(83.8,85.9)
\qbezier(83.8,85.9)(87.3,88.9)(89.9,92.3)
\qbezier(89.9,92.3)(91.5,96.1)(92.0,100.0)

\qbezier(54.0,98.2)(50.4,99.7)(46.3,100.2)
\qbezier(46.3,100.2)(42.0,99.7)(37.7,98.2)
\qbezier(37.7,98.2)(33.4,95.8)(29.4,92.5)
\qbezier(29.4,92.5)(25.7,88.5)(22.7,84.0)
\qbezier(22.7,84.0)(20.3,79.1)(18.6,74.0)
\qbezier(18.6,74.0)(17.8,68.8)(17.9,63.9)
\qbezier(17.9,63.9)(18.7,59.4)(20.4,55.4)
\qbezier(20.4,55.4)(22.9,52.2)(26.0,49.8)
\qbezier(26.0,49.8)(29.6,48.3)(33.7,47.8)
\qbezier(33.7,47.8)(38.0,48.3)(42.3,49.8)
\qbezier(42.3,49.8)(46.6,52.2)(50.6,55.5)
\qbezier(50.6,55.5)(54.3,59.5)(57.3,64.0)
\qbezier(57.3,64.0)(59.7,68.9)(61.4,74.0)
\qbezier(61.4,74.0)(62.2,79.2)(62.1,84.1)
\qbezier(62.1,84.1)(61.3,88.6)(59.6,92.6)
\qbezier(59.6,92.6)(57.1,95.8)(54.0,98.2)

\qbezier(47.8,67.8)(43.6,71.3)(38.7,73.9)
\qbezier(38.7,73.9)(33.5,75.5)(28.0,76.0)
\qbezier(28.0,76.0)(22.5,75.5)(17.3,73.9)
\qbezier(17.3,73.9)(12.4,71.3)(8.2,67.8)
\qbezier(8.2,67.8)(4.7,63.6)(2.1,58.7)
\qbezier(2.1,58.7)(0.5,53.5)(0.0,48.0)
\qbezier(0.0,48.0)(0.5,42.5)(2.1,37.3)
\qbezier(2.1,37.3)(4.7,32.4)(8.2,28.2)
\qbezier(8.2,28.2)(12.4,24.7)(17.3,22.1)
\qbezier(17.3,22.1)(22.5,20.5)(28.0,20.0)
\qbezier(28.0,20.0)(33.5,20.5)(38.7,22.1)
\qbezier(38.7,22.1)(43.6,24.7)(47.8,28.2)
\qbezier(47.8,28.2)(51.3,32.4)(53.9,37.3)
\qbezier(53.9,37.3)(55.5,42.5)(56.0,48.0)
\qbezier(56.0,48.0)(55.5,53.5)(53.9,58.7)
\qbezier(53.9,58.7)(51.3,63.6)(47.8,67.8)

\qbezier(119.8,67.8)(115.6,71.3)(110.7,73.9)
\qbezier(110.7,73.9)(105.5,75.5)(100.0,76.0)
\qbezier(100.0,76.0)(94.5,75.5)(89.3,73.9)
\qbezier(89.3,73.9)(84.4,71.3)(80.2,67.8)
\qbezier(80.2,67.8)(76.7,63.6)(74.1,58.7)
\qbezier(74.1,58.7)(72.5,53.5)(72.0,48.0)
\qbezier(72.0,48.0)(72.5,42.5)(74.1,37.3)
\qbezier(74.1,37.3)(76.7,32.4)(80.2,28.2)
\qbezier(80.2,28.2)(84.4,24.7)(89.3,22.1)
\qbezier(89.3,22.1)(94.5,20.5)(100.0,20.0)
\qbezier(100.0,20.0)(105.5,20.5)(110.7,22.1)
\qbezier(110.7,22.1)(115.6,24.7)(119.8,28.2)
\qbezier(119.8,28.2)(123.3,32.4)(125.9,37.3)
\qbezier(125.9,37.3)(127.5,42.5)(128.0,48.0)
\qbezier(128.0,48.0)(127.5,53.5)(125.9,58.7)
\qbezier(125.9,58.7)(123.3,63.6)(119.8,67.8)

\put(44,40){\circle*{3}}\put(51,37){\circle*{3}}\put(64,35){\circle*{3}}\put(84,40){\circle*{3}}
\qbezier(44,40)(64,30)(84,40)
\qbezier(64,105)(30,105)(15,45)\qbezier(15,45)(18,38)(25,47)\qbezier(25,47)(60,120)(40,45)\qbezier(40,45)(42,26)(50,45)
\qbezier(50,45)(64,140)(78,45)
\qbezier(64,105)(98,105)(113,45)\qbezier(113,45)(110,38)(103,47)\qbezier(103,47)(68,120)(88,45)\qbezier(88,45)(86,26)(78,45)
\put(27,110){$S_0$}\put(15,85){$B$}\put(98,85){$C_0$}
\put(27,10){$G_1^{S_0}$}\put(100,10){$G_2^{S_0}$}
\put(62,20){$P$}\put(30,37){$u_0$}\put(45,30){$u_1$}\put(60,38){$u_2$}\put(80,30){$u_3$}
\put(60,-5){(a)}
\end{picture}
\hskip 0.5cm\begin{picture}(100,130)
\qbezier(65.0,70.0)(64.1,76.5)(61.7,82.5)
\qbezier(61.7,82.5)(57.7,87.7)(52.5,91.7)
\qbezier(52.5,91.7)(46.5,94.1)(40.0,95.0)
\qbezier(40.0,95.0)(33.5,94.1)(27.5,91.7)
\qbezier(27.5,91.7)(22.3,87.7)(18.3,82.5)
\qbezier(18.3,82.5)(15.9,76.5)(15.0,70.0)
\qbezier(15.0,70.0)(15.9,63.5)(18.3,57.5)
\qbezier(18.3,57.5)(22.3,52.3)(27.5,48.3)
\qbezier(27.5,48.3)(33.5,45.9)(40.0,45.0)
\qbezier(40.0,45.0)(46.5,45.9)(52.5,48.3)
\qbezier(52.5,48.3)(57.7,52.3)(61.7,57.5)
\qbezier(61.7,57.5)(64.1,63.5)(65.0,70.0)
\put(100,10){\line(-1,0){90}}\put(100,40){\line(-1,0){90}}\put(100,70){\line(-1,0){90}}\put(100,100){\line(-1,0){90}}
\put(10,10){\line(0,1){90}}\put(40,10){\line(0,1){90}}\put(70,10){\line(0,1){90}}\put(100,10){\line(0,1){90}}
\put(85,25){\circle*{3}}
\put(-5,80){$G_1^S$}\put(0,50){$S$}\put(-5,20){$G_2^S$}
\put(20,105){$G_1^{S_0}$}\put(50,105){$S_0$}\put(80,105){$G_2^{S_0}$}
\put(30,72){$B$}\put(85,17){$u_t$}
\put(50,-5){(b)}
\end{picture}\hskip 0.5cm\begin{picture}(100,130)
\qbezier(65.0,70.0)(64.1,76.5)(61.7,82.5)
\qbezier(61.7,82.5)(57.7,87.7)(52.5,91.7)
\qbezier(52.5,91.7)(46.5,94.1)(40.0,95.0)
\qbezier(15.0,70.0)(15.9,63.5)(18.3,57.5)
\qbezier(18.3,57.5)(22.3,52.3)(27.5,48.3)
\qbezier(27.5,48.3)(33.5,45.9)(40.0,45.0)
\qbezier(40.0,45.0)(46.5,45.9)(52.5,48.3)
\qbezier(52.5,48.3)(57.7,52.3)(61.7,57.5)
\qbezier(61.7,57.5)(64.1,63.5)(65.0,70.0)
\put(100,10){\line(-1,0){90}}\put(100,40){\line(-1,0){90}}\put(100,70){\line(-1,0){90}}\put(100,100){\line(-1,0){60}}
\put(10,10){\line(0,1){60}}\put(40,10){\line(0,1){90}}\put(70,10){\line(0,1){90}}\put(100,10){\line(0,1){90}}
\put(85,25){\circle*{3}}\put(26,64){\circle*{3}}\put(35,56){\circle*{3}}\put(50,60){\circle*{3}}
\put(80,55){\circle*{3}}\put(55,25){\circle*{3}}\put(55,15){\circle*{3}}
\put(-5,80){$G_1^S$}\put(0,50){$S$}\put(-5,20){$G_2^S$}
\put(20,105){$G_1^{S_0}$}\put(50,105){$S_0$}\put(80,105){$G_2^{S_0}$}
\put(25,75){$B$}\put(85,17){$u_t$}
\put(16,63){$v_1$}\put(24,52){$v_2$}\put(57,17){$v'_1$}\put(50,30){$v'_2$}
\qbezier(85,25)(80,30)(85,35)\qbezier(80,55)(90,45)(85,35)\qbezier(80,55)(55,65)(55,80)
\qbezier(85,25)(70,30)(50,60)\qbezier(50,80)(45,70)(50,60)
\qbezier(85,25)(70,30)(55,25)\qbezier(35,40)(40,20)(55,25)\qbezier(35,40)(30,50)(45,80)
\qbezier(85,25)(70,10)(55,15)\qbezier(30,35)(30,25)(55,15)\qbezier(30,35)(10,60)(45,85)
\put(50,-5){(c)}
\end{picture}

\vskip 0.5cm\caption{$(a)$ An illustration of how to find node set $U$ to be added. The starfish-shaped region represents $C_0$. $(b)$ The distribution of $B$ and $u_t$. $(c)$ An illustration for the proof of property $(P_3)$.}\label{figA1}
\end{center}
\end{figure*}
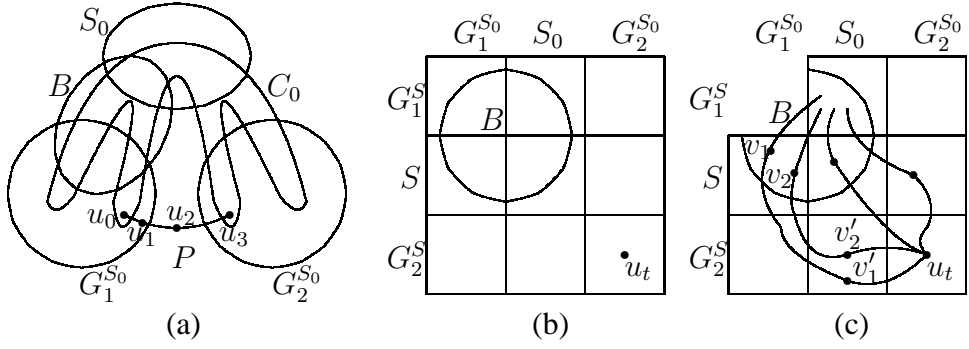

Since $G$ is $k$-connected and $|S_0|=i<k$, graph $G-S_0$ is connected. Let $P$ be a shortest path $P$ between $G_1$ and $G_2$ in $G-S_0$. Suppose $P=u_0u_1\ldots u_t$, where $u_0\in V(G_1)$ and $u_t\in V(G_2)$. We claim that $t\leq 3$. In fact, if $t\geq 4$, consider node $u_2$. Since $C_0$ is an $m$-fold dominating set with $m\geq k>|S_0|$, node $u_2$ is adjacent with some node $u\in C_0\setminus S_0$. If $u\in V(G_1)$, then $uu_2\ldots u_t$ is a shorter path between $G_1$ and $G_2$. If $u\in V(G_2)$, then $u_0u_1u_2u$ is a shorter path between $G_1$ and $G_2$. Both contradict the choice of $P$. As a consequence, if we set $U$ to be the set of internal nodes of $P$, then $|U|\leq 2$.

Notice that $u_t\in C_0\setminus B$. So, if
\begin{equation}\label{eq15-10-14-5}
\mbox{no $i$-separator of $G[C\cup U]$ can separate $B\cup\{u_t\}$,}
\end{equation}
then we have found a desired $U$ and may continue with the next iteration. Otherwise, there exists an $i$-separator $S$ of $G[C\cup U]$ which separates $u_t$ from a node of $B$. Since $B\setminus S$ is connected in $G[C]-S$,
\begin{equation}\label{eq11-13-1}
\mbox{$S$ separates $u_t$ from all nodes of $B\setminus S$.}
\end{equation}
In the following, we shall consider nodes in $B\setminus S$ as a whole.

By Corollary \ref{cor15-3-30-3} $(\romannumeral3)$, $S$ is an $i$-separator of $G[C]$. Let $G_1^S$ be a connected component of $G[C]-S$ which contains $B\setminus S$ (recall that $B$ cannot be separated by $S$). We have the following properties.

$(P_1)$ The distribution of $B$ is as in Fig.\ref{figA1}$(b)$, since $B\cap V(G_2^{S_0})=\emptyset$ and $B\cap V(G_2^S)=\emptyset$.

$(P_2)$ Node $u_t\in V(G_2^{S_0})\cap V(G_2^S)$. The reason why $u_t\in V(G_2^{S_0})$ is by the choice of path $P$. The reason why $u_t\in V(G_2^S)$ is because of \eqref{eq11-13-1}.

$(P_3)$ $V(G_1^{S_0})\cap V(G_1^S)\neq\emptyset$. Suppose this is not true, then the distribution of $B$ is as in Fig.\ref{figA1}$(c)$. It follows that $B\cap V(G_1^{S_0})\subseteq S$. Since $|B|\geq i+1>|S_0|$, we have
\begin{equation}\label{eq15-10-14-2}
|B\cap V(G_1^{S_0})|=|B\setminus S_0|>|S_0\setminus B|.
\end{equation}
Since $S$ is an $i$-separator of $G[C]$ separating $u_t$ from $B\setminus S$, there are $i$ internally disjoint paths in $G[C]$ connecting $B\setminus S$ and $u_t$, each path containing exactly one node of $S$. For a node $v\in S$, denote by $P_v$ such a path containing $v$. For each node
$v\in B\cap V(G_1^{S_0})\subseteq S$, the path $P_v$ must go through a distinct node $v'\in S_0\setminus B$ (see Fig.\ref{figA1}$(c)$). But this is impossible because of \eqref{eq15-10-14-2}. Property $(P_3)$ is proved.

Taking $A_1=V(G_1^{S_0})$, $A_2=V(G_1^S)$. Property $(P_3)$ shows that $A_1\cap A_2\neq\emptyset$. Property $(P_2)$ shows that $u_t\in V(G_2^{S_0})\cap V(G_2^S)=C\setminus (N_G[A_1]\cup N_G[A_2])\neq\emptyset$. So by Lemma \ref{lem-fragment}, the neighbor set of $V(G_1^{S_0})\cap V(G_1^S)$, denoted as $S_1$, is also an $i$-separator of $G[C]$. Let $G_1^{S_1}$ be the connected component of $G[C]-S_1$ which contains $B\setminus S_1$ (the left-top square of Fig.\ref{figA1}$(b)$ is $G_1^{S_1}$). Notice that $S\cap V(G_1^{S_0})\neq\emptyset$, otherwise $B$ cannot be separated from $u_0$ by $S$, and thus cannot be separated from $u_t$ by $S$ (since $S\cap U=\emptyset$, $u_0$ is connected to $u_t$ through $U$ in $G[C\cup U]-S$). It follows that $|V(G_1^{S_1})|<|V(G_1^{S_0})|$.

To sum up, if for $S_0$, the shortest path $P$ found in the above way does not satisfy property \eqref{eq15-10-14-5}, then another $k$-separator $S_1$ of $G[C]$ can be found such that $G_1^{S_1}$ is strictly smaller than $G_1^{S_0}$. Replace $S_0$ by $S_1$ and repeat, such a procedure can be executed at most $|V(G_1^{S_0})|<n$ times. When it terminates, we are in the situation that property \eqref{eq15-10-14-5} is satisfied, at which time a desired node set $U$ is found, and we may continue with the next iteration.

\subsubsection{When There Is No $i$-Block}\label{no-block}

Now, we consider the situation that there is no $i$-block in $G[C]$. Suppose $C$ is an $(i,m)$-CDS and $G[C]$ is not $(i+1)$-connected. Let $S_0$ be an $i$-separator of $C$. By Corollary \ref{cor15-3-30-3} $(\romannumeral2)$, $S_0\cap C_0$ is a separator of $G[C_0]$. Let $G_1$ be a connected component of $G[C_0]-(S_0\cap C_0)$ and let $G_2$ be the union of the other connected components of $G[C_0]-(S_0\cap C_0)$. Similarly to the above, a shortest path $P$ between $G_1$ and $G_2$ in $G-S_0$ has at most two internal nodes. Let $U$ be the set of internal nodes of $P$. By Lemma \ref{lem15-10-14-1}, $U\cup N_C(U)$ cannot be separated by any $i$-separator of $G[C\cup U]$. Furthermore, since $C$ is an $m$-fold dominating set of $G$, we have $|N_C(U)|\geq m\geq k$. So, $|U\cup N_C(U)|\geq k+1>i+1$. In other words, $U\cup N_C(U)$ is contained in an $i$-block of $G[C\cup U]$. This $i$-block can serve as the starting point of the subsequent iterations.

\subsubsection{To Sum Up}

The algorithm is summarized in Algorithm \ref{algo1}. The variable $flag$ is used to indicate whether we should adjust the $i$-separator $S_0$ used in the algorithm such that adding the internal nodes of the path can strictly expand current $i$-block $B$ by at least one node of $C_0$. As we have shown in the above, the inner while loop is executed $O(n)$ times, and the outer while loop is executed at most $|C_0|=O(n)$ times. Since determining whether two nodes can be separated by an $i$-separator can be done in polynomial time using a maximum flow algorithm, an $i$-block can be found in polynomial time (if it exists). In fact, this can be accomplished more efficiently using the method in \cite{Carmesin}, in which a stronger result was obtained by Carmesin {\it et al.} showing that there is an $O(\min\{i,\sqrt{n}\}\cdot m\cdot n^2)$-time algorithm which can find all $i$-blocks of a graph. All other operations can clearly be done in polynomial time. So, our algorithm has polynomial running time. The performance ratio of the algorithm is already implied in the above analysis.

\begin{algorithm}[h!]
\caption{\textbf{}}
Input: Two positive integers $k,m$ with $m\geq k$ and a $k$-connected graph $G$.

Output: A $(k,m)$-CDS $C$.
\begin{algorithmic}[1]
    \State Find a $(1,m)$-CDS $C_0$.
    \State $C\leftarrow C_0$.
    \For{$i=1$ to $k-1$}
        \If{$G[C]$ is not $(i+1)$-connected and there is no $i$-block in $G[C]$}
            \State Use the method in Section \ref{no-block} to find a node set $U$.
            \State $C\leftarrow C\cup U$.
            \State $B\leftarrow$ the $i$-block of $G[C\cup U]$ containing $U\cup N_C(U)$.
        \EndIf
        \While{$G[C]$ is not $(i+1)$-connected}
            \State $flag\leftarrow 1$.
            \State Let $S_0$ be an $i$-separator of $G[C]$.
            \While{$flag=1$}
                \State Use the method in Section \ref{expand-block} to find a path $P=u_0\ldots u_t$ in $G-S_0$.
                \State Let $U'$ be the set of internal nodes of $P$.
                \If{$B\cup\{u_t\}$ cannot be separated by any $i$-separator of $G[C\cup U']$}
                    \State $flag\leftarrow 0$.
                    \State $U\leftarrow U'$, $C\leftarrow C\cup U$.
                    \State $B\leftarrow$ the $i$-block of $G[C]$ containing $B\cup\{u_t\}$.
                \Else
                    \State Let $S$ be an $i$-separator of $G[C\cup U']$ separating $B\cup\{u_t\}$.
                    \State $S_0\leftarrow$ the neighbor set of $V(G_1^{S_0})\cap V(G_1^S)$ in $G[C]$, where $G_1^{s_0}$ and $G_1^S$ are the connected components defined in Section \ref{expand-block}.
                \EndIf
            \EndWhile
        \EndWhile
    \EndFor
\end{algorithmic}\label{algo1}
\end{algorithm}

\begin{theorem}
Suppose $k,m$ are two positive integers with $m\geq k$ and $G$ is a $k$-connected graph. A $(k,m)$-CDS of $G$ can be found in polynomial time whose size is at most $(2k-1)\alpha_0 \cdot opt$, where $opt$ is the size of an optimal solution and $\alpha_0$ is the performance ratio for $(1,m)$-CDS of $G$. Taking $\alpha_0=2+\ln (\Delta+m-2)$ which was obtained in \cite{Zhou}, the performance ratio of our algorithm is $(2k-1)(2+\ln (\Delta+m-2))$, which is $O(\ln \Delta)$ for fixed $k$ and $m$. Based on a PTAS to find a $(1,m)$-CDS in a unit disk graph, our algorithm has performance ratio $2k-1+\varepsilon$ for $(k,m)$-CDS in unit disk graph.
\end{theorem}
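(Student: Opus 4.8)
The plan is to prove, by induction on $i$, the loop invariant that at the start of the $i$-th pass of the outer loop of Algorithm \ref{algo1} the current set $C$ is an $(i,m)$-CDS of $G$ with $|C|\le (2i-1)|C_0|$. The base case $i=1$ is immediate, since $C=C_0$ is a $(1,m)$-CDS and $|C_0|=(2\cdot1-1)|C_0|$. Granting the invariant for $i$, I would show that one pass of the outer loop adds at most $2|C_0|$ new nodes and leaves $C$ an $(i+1,m)$-CDS; the invariant for $i+1$ then follows (an $(i+1)$-connected graph is $i$-connected, so ``$i$-separator'' and ``$i$-block'' remain meaningful throughout pass $i$), and iterating to $i=k-1$ yields a $(k,m)$-CDS of size at most $(2k-1)|C_0|$.

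For the inductive step I would simply assemble Sections \ref{no-block} and \ref{expand-block}. Since $G$ is $k$-connected it is $(i+1)$-connected, so Corollary \ref{cor15-3-30-3}$(\romannumeral1)$ guarantees that $C$ stays an $(i,m)$-CDS after every node addition inside pass $i$; hence when the middle \textbf{while} loop finally exits, $G[C]$ is $(i+1)$-connected and $C$ is an $(i+1,m)$-CDS. For the size bound I rely on two monotonicity facts established in Section \ref{expand-block}: \emph{(a)} whenever \eqref{eq15-10-14-5} fails, the replacement $S_0\leftarrow S_1$ strictly decreases $|V(G_1^{S_0})|$, so the innermost loop (the one controlled by $flag$) performs fewer than $n$ iterations and adds no node to $C$ until the iteration in which it succeeds; and \emph{(b)} each successful iteration adds the at most two internal nodes of the path $P$ while enlarging the current $i$-block $B$ by the node $u_t\in C_0\setminus B$, so the body of the middle \textbf{while} loop runs at most $|C_0|$ times. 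When $G[C]$ has no $i$-block, Section \ref{no-block} first spends at most two nodes to build, via Lemma \ref{lem15-10-14-1}, an initial $i$-block $B\supseteq U\cup N_C(U)$ that already contains a node of $C_0$ (an endpoint of $P$), so the subsequent expansions number at most $|C_0|-1$; either way the pass adds at most $2|C_0|$ new nodes.

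Polynomiality falls out of the same bookkeeping: by \emph{(a)} the innermost loop performs $O(n)$ iterations per round of the middle loop, by \emph{(b)} the middle loop runs $O(n)$ times, and there are $k-1$ passes, so only polynomially many elementary operations occur; each of those — deciding $(i+1)$-connectivity and exhibiting an $i$-separator, deciding whether $B\cup\{u_t\}$ can be separated by an $i$-separator, and finding a shortest path in $G-S_0$ — reduces to polynomially many maximum-flow computations, and an $i$-block can be found in polynomial time (in $O(\min\{i,\sqrt n\}\,m\,n^2)$ time by the algorithm of Carmesin \emph{et al.} \cite{Carmesin}). Finally, every $(k,m)$-CDS is also a $(1,m)$-CDS, so the minimum $(1,m)$-CDS has size at most $opt$, whence the algorithm of \cite{Zhou} returns $C_0$ with $|C_0|\le\alpha_0\cdot opt$ and the output satisfies $|C|\le(2k-1)|C_0|\le(2k-1)\alpha_0\cdot opt$. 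Substituting $\alpha_0=2+\ln(\Delta+m-2)$ from \cite{Zhou} gives the $O(\ln\Delta)$ bound on a general graph, and substituting the PTAS value $\alpha_0=1+\varepsilon$ on a unit disk graph gives ratio $2k-1+\varepsilon$.

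The step I expect to require the most care is fact \emph{(a)}: showing that a path $P$ violating \eqref{eq15-10-14-5} can always be ``repaired'' by passing from $S_0$ to the neighbor set $S_1$ of $V(G_1^{S_0})\cap V(G_1^S)$, and that this $S_1$ is again an $i$-separator of $G[C]$ with $|V(G_1^{S_1})|<|V(G_1^{S_0})|$. This is exactly properties $(P_1)$--$(P_3)$ of Section \ref{expand-block}, which rest on the fragment inequality (Lemma \ref{lem-fragment}) together with the counting argument $|B|\ge i+1>|S_0|$ comparing the $i$ internally disjoint paths from $B\setminus S$ to $u_t$ against $S_0\setminus B$; everything else in the argument above is routine assembly of Lemmas \ref{lem15-3-30-2}, \ref{lem-fragment}, \ref{lem15-10-14-1} and Corollary \ref{cor15-3-30-3} with the two termination counts.
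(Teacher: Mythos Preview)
Your proposal is correct and follows essentially the same approach as the paper: the paper presents the theorem as an immediate consequence of the preceding analysis (``The performance ratio of the algorithm is already implied in the above analysis''), and what you have done is organize that analysis into an explicit loop invariant and induction on $i$, together with the same two termination counts for the nested \textbf{while} loops. Your bookkeeping for the ``no $i$-block'' start-up in Section~\ref{no-block} is in fact slightly tighter than the paper's informal phrasing---you correctly note that the initial $i$-block already absorbs a node of $C_0$, so the $\le 2$ start-up nodes plus $\le 2(|C_0|-1)$ expansion nodes still fit inside $2|C_0|$---but this is the same argument, just made precise.
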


\section{Conclusion}\label{con}

In this paper, we proposed the first approximation algorithm with a guaranteed performance ratio for the minimum $(k,m)$-CDS problem in a general graph for general $k$ under the assumption that $m\geq k$. The performance ratio is $(2k-1)(\ln (\Delta+m-2)+2)$. Prior to this work, we have obtained $(\ln \Delta+o(\ln\Delta))$-approximation algorithms for $k\leq 3$ \cite{Shi,Zhang2,Zhou}, the strategy of which is to expand an $i$-brick greedily, where an $i$-brick is a maximal $(i+1)$-connected {\em induced} subgraph. The difficulty of generalizing such a strategy to deal with higher value of $k$ is that for $i\geq 3$, even when one could obtain a similar decomposition result using $i$-bricks, it is not clear how the decomposition structure changes when more nodes are added. In this paper, we propose using another strategy by expanding an $i$-block instead of expanding an $i$-brick. This new strategy works for any constant $k$. However, its performance ratio is not as delicate as those for $k\leq 3$. An improvement on the coefficient before $\ln \Delta$ remains to be further explored.

\section*{Acknowledgment}
This research is supported by NSFC (61222201,11531011), and Xingjiang Talent Youth Project (2013711011).

\end{document}